\documentclass[12pt,reqno]{amsart}

\usepackage[utf8]{inputenc}

\usepackage[left = 0.8in, right = 0.8in, top = 0.8in, bottom = 0.8in]{geometry}
\usepackage{amsmath,amsthm,amssymb}
\usepackage{multicol}
\usepackage{graphicx}
\usepackage{color}
\usepackage[sorting=none, url = false, doi = false, eprint = false, isbn = false]{biblatex} 
\usepackage{cancel}
\usepackage{todonotes}
\usepackage{cleveref}
\usepackage[normalem]{ulem}

\newtheorem{theorem}{Theorem}

\newtheorem{remark}{Remark}

\newcommand{\R}{\mathbb{R}}

\newcommand{\Z}{\mathbb{Z}}
\newcommand{\C}{\mathbb{C}}

\DeclareMathOperator\im{Im}
\DeclareMathOperator\sgn{sgn}
\DeclareMathOperator{\sech}{sech}
\DeclareMathOperator{\csch}{csch}

\newcommand{\stkout}[1]{\ifmmode\text{\sout{\ensuremath{#1}}}\else\sout{#1}\fi}

\title{Bound States in Exactly Solvable Polaritonic Models}

\author[]{Sam Johar}
\address{Department of Mathematics, Amherst College, Amherst, MA 01002, USA}
\email{sjohar27@amherst.edu}

\author[]{Ryogo Katahira}
\address{Department of Mathematics, Amherst College, Amherst, MA 01002, USA}
\email{rkatahira27@amherst.edu}

\author[]{Joseph Kraisler}
\address{Department of Mathematics, Williams College, Williamstown, MA 01267, USA}
\email{jk34@williams.edu}

\date{\today}

\begin{document}

\begin{abstract}
We study a one-dimensional scalar model for an effective-mass quasiparticle field coupled to a continuum of two-level atoms with at most one excitation present. We prove a general theorem on the existence of bound states under sign-definite spatially localized perturbations of constant atomic density and analyze the bound states for several exactly solvable examples. 
\end{abstract}

\maketitle

\section{Introduction and Model}\label{sec:Intro}

In this paper we consider a one-dimensional model of a scalar quantized field with dispersion relation $\omega(k)=k^2$ coupled to a continuum of identical two-level atoms. The atoms are distributed according to a number density $\rho(x)$ and have two distinct energy levels: a ground state energy normalized to $0$ and an excited state energy $E=\hbar\Omega>0$. We refer to $\Omega$ as the resonant frequency of the atoms. At the level of quantized fields, the Hamiltonian $\mathcal{H}$ is a sum of three terms, one for the energy of the field, a second for the atomic energy, and a third corresponding to the interaction energy:
\begin{align}
    \mathcal{H} = \hbar\int_{\R} \phi^{\dagger}(x)\Delta\phi (x)dx + \hbar\Omega\int_{\R}\sigma^{\dagger}(x)\sigma(x)\rho(x)dx + \hbar g\int_{\R}\left(\sigma(x)\phi^{\dagger}(x)+\sigma^{\dagger}(x)\phi(x)\right)\rho(x)dx.
\end{align}
where $\phi^{\dagger}(x), \phi(x)$ are bosonic creation and annihilation operators for the field respectively, while $\sigma^{\dagger}(x), \sigma(x)$ are atomic raising and lowering operators. $\Delta = -\partial_x^2$ is the one-dimensional Laplacian and $g>0$ is a coupling constant which determines the strength of the interaction between the field and the matter. To obtain this Hamiltonian, we must assume that all field modes couple equally to each atom and ignore rapidly oscillating terms in the Rotating Wave Approximation (RWA) \cite{ScullyZubairy1997,CohenTannoudji1992}. See \cite{kraisler_22} for further discussion of this Hamiltonian in the case of a linear dispersion relation $\omega(k)=c|k|$ and \cite{KraislerLee2023} for the generalization to arbitrary smooth dispersion relation $\omega(k)$.\\

Quadratic dispersion relations arise naturally for excitations with an effective mass, including cavity photons in dye-filled microcavities \cite{Klaers2010}, semiconductor exciton-polaritons \cite{Weisbuch1992,Bieganska2026}, and slow-light  polaritons \cite{Gullans2016,Firstenberg2013,Baba2009}. More generally, polaritons arise through the hybridization of electromagnetic and material excitations \cite{Hopfield1958}; see \cite{CarusottoCiuti2013} for a broad review of quantum fluids of light and exciton-polariton systems.\\

As the number of total excitations is conserved by each term in the Hamiltonian $\mathcal{H}$, the linear space of single-excitation states is invariant. Such states have the form
\begin{align}
    \vert\Psi(t)\rangle = \int_{\R} \left[\psi(x,t)\phi^{\dagger}(x) + \alpha(x,t)\rho(x)\sigma^{\dagger}(x)\vert 0\rangle\right] dx.
\end{align}
where $\vert 0\rangle$ is the zero-excitation joint vacuum state of the field and atoms. Here $\psi(x,t)$ is the amplitude for the field excitation, while $\alpha(x,t)$ is the amplitude for an atomic excitation at position $x\in\R$. It can be shown, similarly to Appendix A in \cite{kraisler_22} that $\alpha$ and $\psi$ satisfy the pair of coupled evolution equations
\begin{subequations}\label{eq:time_dependent}
\begin{align}
i\partial_t\psi &=-\partial_x^2\psi+g\rho(x)\alpha,\\
i\partial_t\alpha & =g\psi+\Omega\alpha.
\end{align}
\end{subequations}
along with the normalization condition
\begin{align}
\int_{\R}\left(|\psi(x,t)|^2+|\alpha(x,t)|^2\rho(x)\right)dx=1.
\end{align}
Thus $\psi$ belongs to the usual space $L^2(\R)$ of square-integrable functions, while $\alpha$ belongs to the weighted space $L^2(\R,\rho(x)dx)$. This normalization allows us to interpret $|\psi(x,t)|^2$ as the probability density for finding the field excitation near $x$, and $|\alpha(x,t)|^2\rho(x)$ as the probability density for finding an excited atom near $x$.\\

For $\omega\in\R$, time harmonic solutions to the system \eqref{eq:time_dependent} of the form $\psi(x,t)=e^{-i\omega t}\psi(x)$, $\alpha(x,t)=e^{-i\omega t}\alpha(x)$ solve the following eigenvalue equation
\begin{subequations}\label{eq:spectral_problem}
\begin{align}
-\partial_x^2\psi+g\rho(x)\alpha &= \omega\psi,\label{eq:spectral_problem1}\\
g\psi+\Omega\alpha&=\omega\alpha\label{eq:spectral_problem2},
\end{align}
\end{subequations}
where $\psi\in L^2(\R)$ and $\alpha\in L^2(\R, \rho(x)dx)$. We call nontrivial solutions to the above problem \textit{bound states} with associated frequency $\omega$. The purpose of this paper is to study the existence of bound states for the system above for perturbations of constant background density. \\

Finally, we highlight the Hamiltonian viewpoint of the problem. If $\rho\in L^{\infty}$ and $\rho(x)\geq 0$, we can make the change of variables $\phi(x)=\sqrt{\rho(x)}\alpha(x)$ to see that \eqref{eq:spectral_problem} is equivalent to
\begin{align}
    H\begin{pmatrix}
        \psi \\ \phi 
    \end{pmatrix}=\omega\begin{pmatrix}
        \psi \\ \phi 
    \end{pmatrix},\qquad H = \begin{pmatrix}
        -\partial_x^2 & g\sqrt{\rho}\\ g\sqrt{\rho} & \Omega
    \end{pmatrix}.
\end{align}
The Hamiltonian, $H$, acts on the dense subspace of $L^2(\R;\C^2)$, the space of $\C^2$-valued square-integrable functions. The precise domain is $\mathcal{D}(H)=H^2(\R)\oplus L^2(\R)$, and bound states are exactly eigenvectors of this Hamiltonian.\\

The single-excitation problem considered here belongs to a broader class of continuum light–matter generalizing those considered by Dicke \cite{Dicke_1953}, Jaynes and Cummings \cite{Jaynes_1963}, and Tavis and Cummings \cite{TavisCummings_1968}, and related to generalized Friedrichs–Lee Hamiltonians \cite{Friedrichs1948,Lonigro2022}. Analytical and numerical aspects of single-excitation dynamics for discrete and continuously distributed atomic systems have been studied in \cite{kraisler_22,kraisler_23,Hoskins_21,Hoskins_23}. For the nonlocal photon-dispersion model, localized atomic densities were considered in \cite{NonlocPDE1}, while periodic atomic arrangements and their associated band structure were investigated in \cite{NonlocPDE2}. Bound states produced by spectral gaps also arise in the established literature on atoms coupled to photonic band-gap media \cite{JohnWang1990,JohnQuang1994}. The present work differs from these studies by considering a local effective-mass dispersion and exploiting the resulting ordinary differential equations to obtain general existence results and detailed, exactly solvable defect models.\\

The rest of the paper is organized as follows. In Section \ref{sec:constant_density} we show that the system \eqref{eq:spectral_problem} can be reduced to a scalar nonlinear eigenvalue problem and obtain the band structure of the constant density model. While this density does not support bound states, we find that the frequency spectrum has two bands and two gaps, and it is precisely within those gaps that bound states arise under perturbations of constant density. Following this, in Section \ref{sec:general_theorem} we state a general theorem on the existence of bound states for atomic densities which are sign-definite and spatially localized perturbations of a constant background. The proof, being more technical than the rest of the article, has been relegated to Appendix \ref{app:proof_main}. Sections \ref{sec:delta}, \ref{sec:double_delta}, and \ref{sec:square_well} make up the bulk of this work and study the problem of a single $\delta$-function, double $\delta$-function, and piecewise constant barriers and defects. In each case we determine a transcendental equation whose solutions determine the corresponding frequencies of bound states and analyze the number of such solutions as a function of the physical parameters in the model.

\section{Reduction to Scalar Problem and Constant Density}\label{sec:constant_density}
When $\omega\neq\Omega$ equation \eqref{eq:spectral_problem2} may be solved for $\alpha(x)$ as
\begin{align} 
\alpha(x)=\frac{g}{\omega-\Omega}\psi(x),
\end{align}
and the system \eqref{eq:spectral_problem} reduces to the scalar eigenvalue problem
\begin{align}\label{eq:reduced_spectral}
-\partial_x^2\psi+\frac{g^2\rho(x)}{\omega-\Omega}\psi=\omega\psi.
\end{align}
This is a nonlinear eigenvalue problem due to the presence of the spectral parameter $\omega$ in the effective potential. It may also be viewed as an eigenvalue problem for a one-dimensional Schr\"odinger operator with an energy dependent potential $V(x,\omega) = \dfrac{g^2\rho(x)}{\omega-\Omega}$.\\

It remains to consider whether this reduction loses possible solutions at the resonant frequency $\omega=\Omega$. For the densities studied in this paper, it does not. Indeed, if $\omega=\Omega$, then the equation \eqref{eq:spectral_problem2} implies $\psi(x)=0$ wherever $\rho(x)$ is supported. Since the densities considered here have support of full measure, this gives $\psi=0$ almost everywhere. Substituting back into equation \eqref{eq:spectral_problem2} then gives $\rho\alpha$, and hence $\alpha$, vanishes almost everywhere. Thus there are no nontrivial resonant bound states in the cases considered here.\\

Suppose that $\rho(x)\equiv\rho_0>0$ is a constant. Then the Hamiltonian $H$ has no bound states and the dispersion relation has two bands given by the solutions to the equation
\begin{align}
    \det\begin{pmatrix} k^2-\omega & g\rho_0\\ 
    g & \Omega - \omega
    \end{pmatrix}=0.
\end{align}
These bands never intersect and lie on either side of $\Omega$. Call them $\omega_{-}(k)< \Omega < \omega_{+}(k)$. Then
\begin{align}
    \omega_{\pm}(k) =\frac{1}{2}\left(\Omega + k^2 \pm \sqrt{(\Omega-k^2)^2+4g^2\rho_0}\right).
\end{align}
Hence, the spectrum is purely absolutely continuous and comprises two intervals
\begin{align}\label{eq:bands}
    I_{-}=\overline{\im(\omega_{-})}=[\omega_{-},\Omega],\qquad I_{+}=\overline{\im(\omega_{+})}=[\omega_{+},\infty).
\end{align} 
where $\omega_{\pm}=\omega_{\pm}(0)=\min_{k\in\R}\omega_{\pm}(k)$. Additionally, $\omega_{-}<0$ while $\omega_{+}>\Omega$. As we will see in the following sections, perturbing the constant density model leads to the formation of bound states appearing within the gaps
\begin{align}\label{eq:gaps}
    G_{-}= (-\infty ,\omega_{-}), \qquad G_{+}=(\Omega,\omega_{+}).
\end{align} 
For fixed $g>0$ and $\rho_0>0$, it is useful to define the following function $\lambda(\omega)$ 
\begin{align}
    \lambda(\omega) = \omega - \frac{g^2\rho_0}{\omega-\Omega},
\end{align}
whose sign determines whether $\omega$ lies in one of the intervals $I_{\pm}$ or gaps $G_{\pm}$. More specifically, we observe the following relationship between $\lambda(\omega)$ and the spectrum of $H$ in the constant density case.
\begin{figure}[ht]
    \centering
    \includegraphics[width=\linewidth]{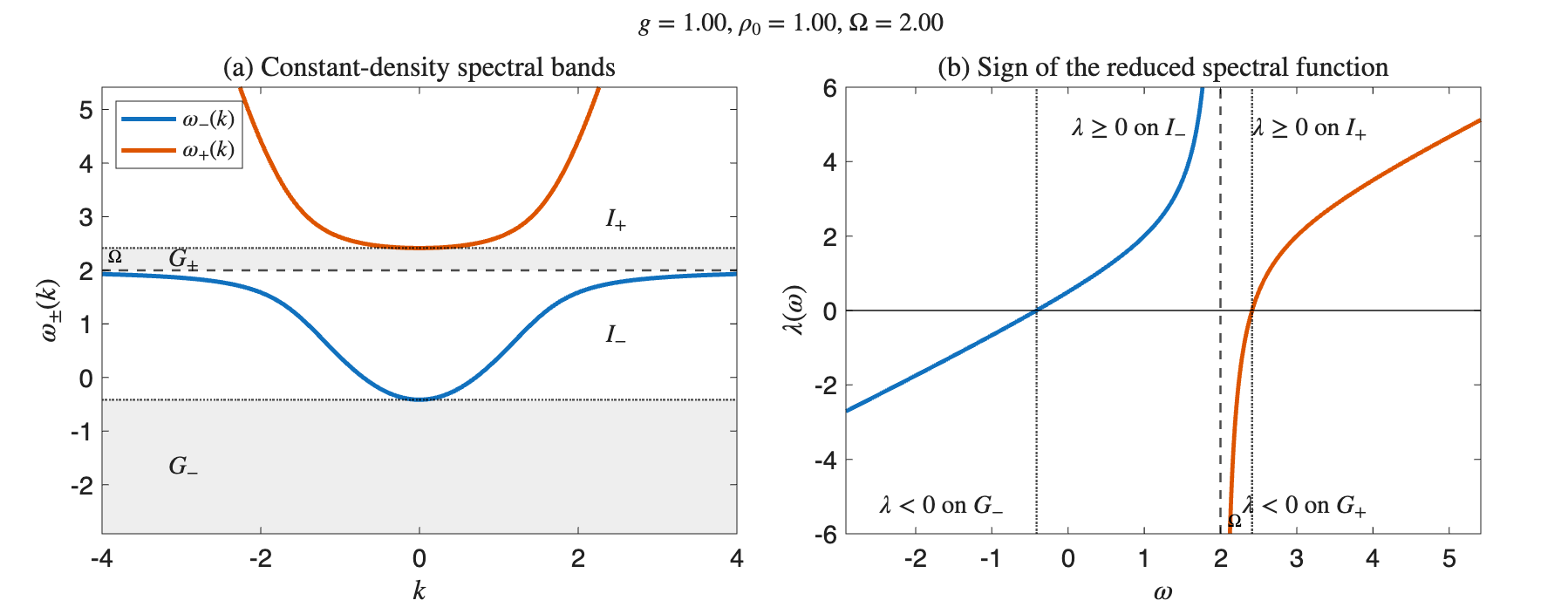}
    \label{fig:spectral_bands}
    \caption{(a) Spectral bands $\omega_{\pm}(k)$ for the constant density model (b) Plot of the spectral function $\lambda(\omega)$ and the intervals of constant sign.}
\end{figure}

\begin{itemize}
    \item $\lambda(\omega) = 0$ if and only if $\omega = \omega_{\pm}$.
    \item $\lambda(\omega)\in (0,\infty]$ if and only if $\omega\in I_{-}\cup I_{+}$.
    \item $\lambda(\omega) < 0$ if and only if $\omega\in G_{-}\cup G_{+}$.
\end{itemize}

\begin{remark}
We remark that by Weyl's Theorem \cite[Theorem XIII.14]{ReedSimonIV} compactly supported perturbations of $\rho_0$ do not change the essential spectrum of the operator $H$. Thus the gaps $G_{\pm}$ persist in all of the examples we consider. However, as we will see, bound states (i.e. eigenvalues) can appear in these gaps.
\end{remark}

\section{Bound states for generic sign-definite perturbations}\label{sec:general_theorem}
Before studying several exactly solvable models, we state a general result on the existence of solutions to the system \eqref{eq:spectral_problem}.
\begin{theorem}\label{thm:main_theorem}
Let $\rho(x)=\rho_0+\rho_1(x)$ where $\rho_0>0$ and $\rho_1\in L_c^{\infty}(\R)$ the space of compactly supported and essentially bounded functions. If $\rho_1\geq 0$ and $\rho_1\not\equiv 0$ then \eqref{eq:spectral_problem} has at least one solution in the lower gap $G_{-}$. If instead $\rho_1\leq 0$, $\rho_1\not\equiv 0$, and there exists $\rho_*>0$ such that $\rho(x)\geq \rho_*$, then \eqref{eq:spectral_problem} has at least one solution in the upper gap $G_{+}$. 
\end{theorem}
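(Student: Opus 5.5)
The plan is to use the reduction of Section~\ref{sec:constant_density} and turn the nonlinear eigenvalue problem into a scalar matching condition. For $\omega\neq\Omega$, write $\mu(\omega)=g^2/(\omega-\Omega)$. Then \eqref{eq:reduced_spectral} with $\rho=\rho_0+\rho_1$ reads
\begin{align*}
L_\omega\psi := -\psi''+\mu(\omega)\rho_1(x)\psi=\lambda(\omega)\psi .
\end{align*}
So $\omega$ in a gap is a bound-state frequency exactly when $\lambda(\omega)$ is an eigenvalue of the linear Schr\"odinger operator $L_\omega$. Given such an $L^2$ eigenfunction $\psi$, setting $\alpha=g\psi/(\omega-\Omega)$ gives $\alpha\in L^2(\R,\rho\,dx)$ because $\rho\in L^\infty$. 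This yields a solution of \eqref{eq:spectral_problem}.

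In both cases of the theorem the potential $\mu(\omega)\rho_1$ is nonpositive, bounded, compactly supported and not identically zero. In the lower gap $\omega<\omega_-<0<\Omega$, so $\mu<0$ and $\rho_1\ge0$. In the upper gap $\omega>\Omega$, so $\mu>0$ and $\rho_1\le0$. The essential spectrum of $L_\omega$ is $[0,\infty)$. By the standard one-dimensional fact, any such potential produces a strictly negative ground state energy. This is seen from a trial function equal to $1$ on the support and decaying slowly, as $e^{-\varepsilon|x|}$, outside it. Define
\begin{align*}
E(\omega)=\inf_{\|u\|_{L^2}=1}\int_{\R}\left(|u'|^2+\mu(\omega)\rho_1|u|^2\right)dx<0 .
\end{align*}
This infimum is attained by an $L^2$ eigenfunction. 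As a function of $\mu$, $E$ is an infimum of affine functions, hence concave and continuous in $\mu$. It is therefore continuous in $\omega$ on each gap, and also up to the endpoints $\omega_\pm$, where $\mu$ is finite. It then suffices to find a sign change of $F(\omega)=E(\omega)-\lambda(\omega)$ on each gap and apply the intermediate value theorem.

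\emph{Lower gap.} As $\omega\to\omega_-^-$ we have $\lambda(\omega)\to0$, while $E(\omega)\to E(\omega_-)<0$. Hence $F<0$ near $\omega_-$. As $\omega\to-\infty$, the trivial bound $E(\omega)\ge-|\mu(\omega)|\,\|\rho_1\|_\infty$ gives $E\to0$, whereas $\lambda(\omega)\sim\omega\to-\infty$. Hence $F>0$ for very negative $\omega$, and a root exists in $G_-$.

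\emph{Upper gap.} As $\omega\to\omega_+^-$, again $\lambda\to0$ and $E(\omega)\to E(\omega_+)<0$, so $F<0$. As $\omega\to\Omega^+$, both $\lambda$ and $E$ tend to $-\infty$, and here the hypothesis $\rho\ge\rho_*$ is what is needed. Since $\rho_1\ge\rho_*-\rho_0$ and $\mu>0$, we get $E(\omega)\ge\mu(\rho_*-\rho_0)$. Therefore
\begin{align*}
F(\omega)\ge\mu(\rho_*-\rho_0)-\omega+\mu\rho_0=\mu(\omega)\rho_*-\omega\to+\infty .
\end{align*}
So $F$ changes sign on $G_+$.

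The main obstacle is the rigorous handling of $E(\omega)$: existence of the negative eigenvalue, its continuity in $\omega$ up to the band edges, and strict negativity at $\omega_\pm$. The rest is elementary. I would handle these through the variational characterization above rather than through explicit ODE shooting. An alternative, if needed, is a Wronskian/Jost-function argument matching decaying exponentials $e^{\mp\sqrt{-\lambda}x}$ outside the support of $\rho_1$.
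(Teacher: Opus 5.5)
Your proof is correct, but it takes a different route from the paper.

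\emph{The paper's route.} It uses a Birman--Schwinger reduction. The problem is rewritten as $K(\omega)\Psi=\Psi$ with $K(\omega)=\frac{\sgn(\rho_1)g^2}{\Omega-\omega}|\rho_1|^{1/2}(-\partial_x^2-\lambda(\omega))^{-1}|\rho_1|^{1/2}$ and $\Psi=|\rho_1|^{1/2}\psi$. On the relevant gap, $K(\omega)$ is a positive Hilbert--Schmidt operator, so its norm is an eigenvalue. The intermediate value theorem is then applied to $\|K(\omega)\|$. At one end of the gap this norm tends to $0$, or has $\limsup<1$ as $\omega\to\Omega^+$. At the band edge $\omega_\pm$ it blows up.

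\emph{Your route.} You keep the linear Schr\"odinger operator $L_\omega$ and compare its ground-state energy $E(\omega)$ with $\lambda(\omega)$. By Birman--Schwinger duality, your condition $F(\omega)<0$ is exactly the condition $\|K(\omega)\|>1$, so both proofs track the same crossing. Your ``infimum attained because $E<0=\inf\sigma_{\mathrm{ess}}$'' plays the role of the paper's ``norm of a positive compact operator is an eigenvalue.''

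\emph{What your version buys.} It is elementary and self-contained. Continuity comes for free from concavity of an infimum of affine functions in $\mu$, with no compactness or Hilbert--Schmidt continuity argument. The band-edge step is also made explicit: the paper only asserts $\|K(\omega)\|\to\infty$ as $\kappa(\omega)\to 0$, while your trial function actually proves $E(\omega_\pm)<0$.

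\emph{The hypothesis $\rho\ge\rho_*$.} It enters both proofs in the same way. Your bound $E\ge\mu(\rho_*-\rho_0)$ is the quadratic-form counterpart of the paper's estimate $\|K(\omega)\|\le g^2\|\rho_1\|_{L^\infty}/((\omega-\Omega)(-\lambda(\omega)))\to\|\rho_1\|_{L^\infty}/\rho_0<1$.

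\emph{What the paper's version buys.} It gives an exact ``if and only if'' characterization: a gap frequency is a bound-state frequency precisely when $1\in\sigma(K(\omega))$. This covers all bound states, not just the ground-state crossing. It also isolates the $\omega$-dependence in an explicit resolvent kernel, which matches the framework of the cited nonlocal-dispersion work.
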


The proof of \Cref{thm:main_theorem} relies on the Birman-Schwinger principle \cite{Birman61,Schwinger61} and is given in \Cref{app:proof_main}. We remark that while this theorem does not directly apply to the $\delta$-function perturbations in Sections \ref{sec:delta} and \ref{sec:double_delta}, the conclusion is consistent with the results in the case that all $\delta$-functions are atomic barriers ($\mu > 0$).

\section{Delta perturbation}\label{sec:delta}
In this section we study the eigenvalue problem for constant densities perturbed by a single delta function. We interpret the positive coefficient case as an idealized localized collection of atoms, while the negative coefficient case can be seen as a signed defect in the density; for the standard theory of one-dimensional point interactions, see \cite{Albeverio2005}. Moreover, the sign of the delta function determines in which gap the bound state appears. Specifically, let the atomic density $\rho$ be given
\begin{align}
    \rho(x) = \rho_0 + \mu\delta(x).
\end{align}
Then the eigenvalue problem  \Cref{eq:reduced_spectral} can be recast as
\begin{align}
    -\partial_x^2\psi +\frac{g^2\mu}{\omega-\Omega}\delta(x)\psi = \lambda(\omega)\psi.
\end{align}
To have a solution we must have $\lambda(\omega) < 0$ and hence $\omega$ in one of the spectral gaps. Let $\kappa(\omega)=\sqrt{-\lambda(\omega)}$. Then the eigenvalue problem above has the solution
\begin{align}
    \psi(x) = Ae^{-\kappa(\omega)|x|}.
\end{align}
To determine the value of $\omega$ we integrate the eigenvalue problem on an interval centered at $x=0$ of width $\epsilon$ and let $\epsilon\to 0$. From this we obtain
\begin{align}
    -\psi'(0^+)+\psi'(0^{-}) + \frac{g^2\mu}{\omega-\Omega}\psi(0) =0,
\end{align}
where $f(0^\pm) :=\lim_{\epsilon\to 0}f(\pm \epsilon)$. This leads to the characterization that $\omega$ is an eigenvalue if and only if
\begin{align}\label{eq:delta_condition}
    \kappa(\omega)= \dfrac{g^2\mu}{2(\Omega-\omega)}.
\end{align}
We remark that $\kappa:G_{-}\cup G_{+}\to[0,\infty)$ and has the following properties:
\begin{enumerate}
    \item $\kappa(\omega)=0$ if and only if $\omega = \omega_{\pm}$.
    \item $\displaystyle\lim_{\omega\to -\infty}\kappa(\omega)=+\infty$ and $\displaystyle\lim_{\omega\to \Omega^{+}}\kappa(\omega)=+\infty$.
    \item $\kappa(\omega)$ is strictly decreasing on $G_{-}$ and $G_{+}$.
\end{enumerate}
\begin{figure}[ht]
    \centering
    \includegraphics[width=\linewidth]{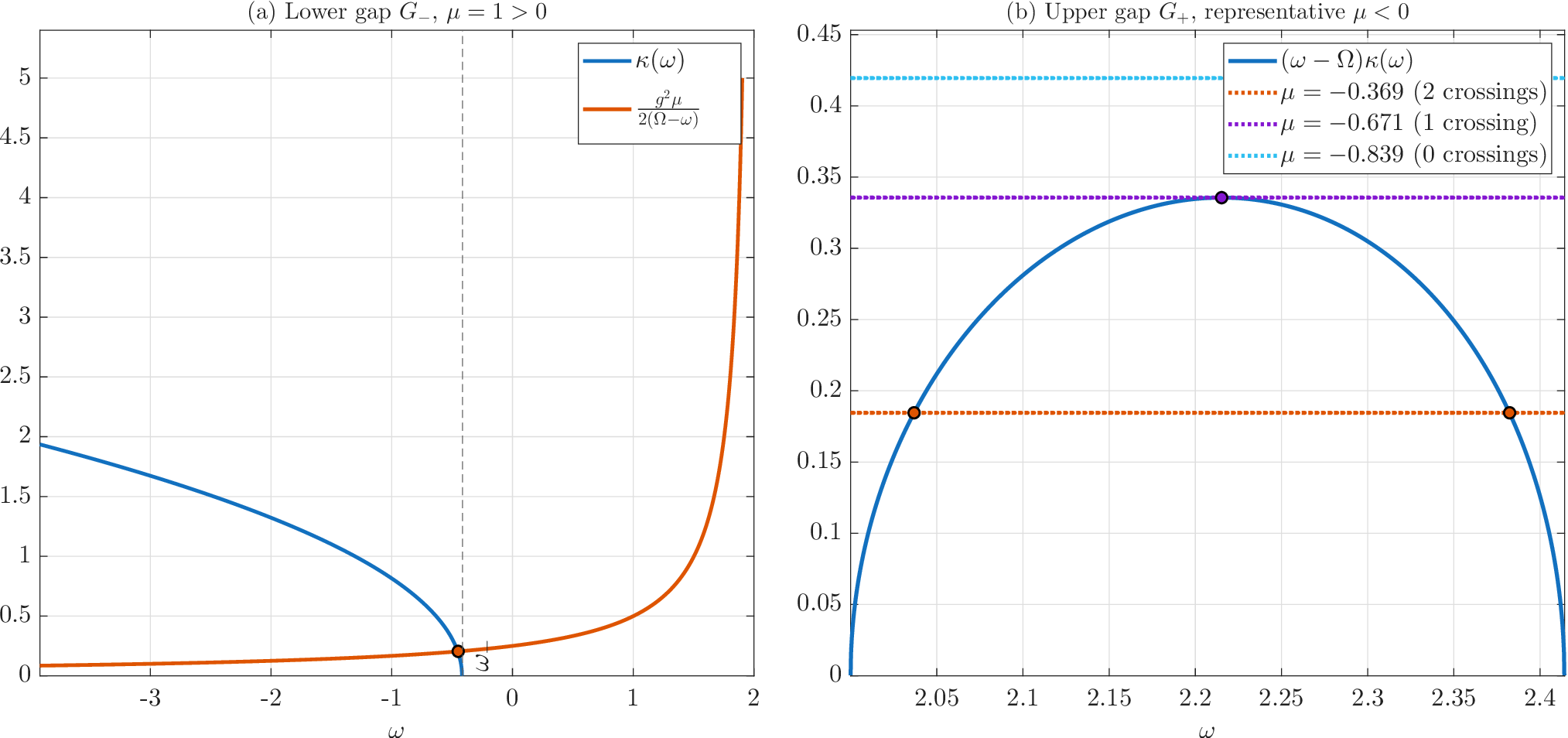}
    \label{fig:placeholder}
    \caption{Examples of crossings which correspond to solutions of \eqref{eq:delta_condition}. (a) A single intersection exists at a negative value. As $\mu$ increases, the orange curve increases and the intersection moves to the left. (b) $\dfrac{-g^2\mu}{2}$ is plotted against $(\omega-\Omega)\kappa(\omega)$ for three values of $\mu<0$. Note that $0$, $1$, or $2$ crossings may be achieved, and there are no solutions once $\mu<\mu_{c1}$. The values of the constants are $\Omega=2$, $\rho_0=1$, $g=1$.  }
\end{figure}
\subsection{Case 1: $\mu>0$}
Using the above properties of $\kappa(\omega)$, when $\mu>0$, the right hand side is negative for $\omega\in G_{+}$ while the left hand side is always nonnegative, hence there is no solution in the upper gap. However, on $G_{-}$ the function $\kappa(\omega)$ decreases from $\infty$ to $0$ while the right hand side increases from $0$ to $\infty$. Thus we have exactly one solution to \eqref{eq:delta_condition} in $G_{-}$. 
\subsection{Case 2: $\mu<0$}
When $\mu<0$ the situation is more subtle. As the right hand side of \eqref{eq:delta_condition} is negative on $G_{-}$, while the left hand side is always nonnegative, there can be no solution there. For $\omega\in G_{+}$ the condition is equivalent to
\begin{align}
    (\omega-\Omega)\kappa(\omega) = -\frac{g^2\mu}{2}.
\end{align}
The function on the left vanishes at each endpoint of $G_{+}$ and has negative second derivative in the interior. This implies it has a single non-degenerate maximum on $G_{+}$. We conclude that there can only be $0,1,$ or $2$ solutions depending on the value of $\mu$. There is a single critical value $\mu_{c}<0$ which gives $1$ solution, and for all $\mu<\mu_{c}$ there are no solutions. A direct calculation gives this critical value to be
\begin{align*}
    \mu_{c} = -\frac{2}{3\sqrt{3}g^2}(\sqrt{\Omega^2+3g^2\rho_0}-\Omega)\sqrt{\Omega + 2\sqrt{\Omega^2+3g^2\rho_0}}.
\end{align*}

We conclude that a local spike in atomic density ($\mu>0$) leads to an eigenvalue with strictly negative frequency, while a local absence in atomic density ($\mu<0$) can create up to two eigenvalues with frequency strictly larger than the resonant frequency $\Omega$. 

\section{Double Delta perturbation}\label{sec:double_delta}
Next we consider densities with two localized defects. Without loss of generality we may assume $\rho(x)$ is of the form
\begin{align}
    \rho(x)=\rho_0 + \mu_{+}\delta(x-d) + \mu_{-}\delta(x+d),
\end{align}
where the separation constant $d>0$. We will see that depending on the values and relationship between $\mu_{+}$ and $\mu_{-}$, there are between $0$ and $4$ eigenvalues with at most $2$ eigenvalues in each of the gaps $G_{-}$ and $G_{+}$.\\

Any bound state must once again have $\lambda(\omega)<0$ and hence have the form
\begin{align}
    \psi(x) = A_{+}e^{-\kappa(\omega)|x-d|}+A_{-}e^{-\kappa(\omega)|x+d|},
\end{align}
where, once again, $\kappa(\omega)=\sqrt{-\lambda(\omega)}$. Integrating around $x=\pm d$ as in the previous section gives two conditions
\begin{align}
    &-\psi'(+d^+)+\psi'(+d^{-}) + \frac{g^2\mu_{+}}{\omega-\Omega}\psi(+d) =0,\\
    &-\psi'(-d^+)+\psi'(-d^{-}) + \frac{g^2\mu_{-}}{\omega-\Omega}\psi(-d) =0.
\end{align}
This shows that $\omega$ is an eigenvalue if and only if there are nontrivial solutions to the following linear system
\begin{align}
    \begin{pmatrix}
        2\kappa + \beta_{+} & \beta_{+}e^{-2\kappa d}\\
        \beta_{-}e^{-2\kappa d} & 2\kappa+\beta_{-}
    \end{pmatrix}\begin{pmatrix}
        A_{+}\\ A_{-}
    \end{pmatrix}=\begin{pmatrix}
        0\\ 0
    \end{pmatrix},
\end{align}
where $\beta_{\pm}=\beta_{\pm}(\omega)=\dfrac{g^2\mu_{\pm}}{\omega-\Omega}$ and we have suppressed the $\omega$ dependence. Taking the determinant of this $2\times 2$ matrix gives a clean characterization of the eigenvalues $\omega$:
\begin{align}\label{eq:doubledelta_condition}
    (2\kappa + \beta_{+})(2\kappa + \beta_{-}) = \beta_{+}\beta_{-}e^{-4\kappa d}.
\end{align}
This is a transcendental equation, which may have up to four solutions. For simplicity, we consider only the case $|\mu_{+}|=|\mu_{-}|$. This substantially reduces the complexity of the calculations while already providing the primary phenomena.

\begin{figure}[ht]
    \centering
    \includegraphics[width=\linewidth]{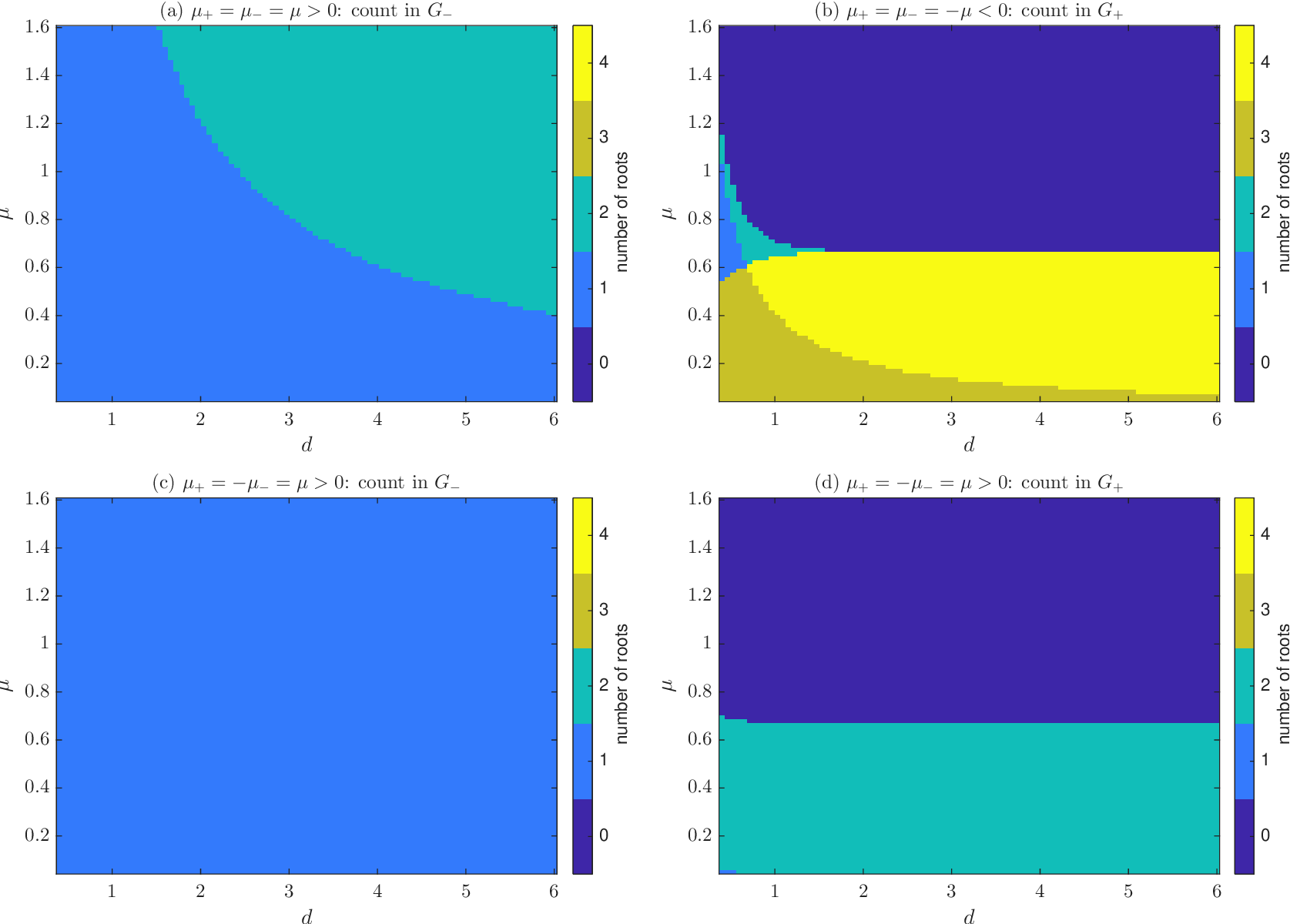}
    \caption{Heat maps showing the number of bound-states for various values of $d$ and $\mu$ in all cases. The number of roots is computed by subdividing each spectral gap, locating sign changes of the corresponding transcendental function, and separately detecting tangencies near double roots. All plots use the values $\Omega=2$, $\rho_0=1$, $g=1$. }
    \label{fig:double_delta}
\end{figure}
\subsection{Case 1: $\mu:=\mu_{+}=\mu_{-}$}
In this case the condition \eqref{eq:doubledelta_condition} becomes
\begin{align}
    (2\kappa + \dfrac{g^2\mu}{\omega-\Omega}(1+e^{-2\kappa d})) (2\kappa + \dfrac{g^2\mu}{\omega-\Omega}(1-e^{-2\kappa d}))=0.
\end{align}
Due to the fact that $\rho(x)$ is an even function, we may seek out even and odd solutions separately. It is not difficult to see that the conditions for even and odd solutions are:
\begin{itemize}
    \item Even: \ $\dfrac{\kappa(\omega)(\Omega-\omega)}{1+e^{-2\kappa(\omega)d}}=\dfrac{g^2\mu}{2}$
    \item Odd: \ $\dfrac{\kappa(\omega)(\Omega-\omega)}{1-e^{-2\kappa(\omega)d}}=\dfrac{g^2\mu}{2}$
\end{itemize}
For the analysis we introduce the auxiliary notation 
\begin{align*}
    Q_{e}(\kappa) = \frac{2\kappa}{1+e^{-2\kappa d}} = \kappa(1+\tanh(\kappa d)),\quad Q_{o}(\kappa) = \frac{2\kappa}{1-e^{-2\kappa d}} = \kappa(1+\coth(\kappa d)).
\end{align*}
Then the even and odd conditions become
\begin{align}\label{eq:doubledelta_condition_2}
    \mu = \frac{1}{g^2}Q_{e}(\kappa(\omega))(\Omega-\omega),\quad \mu = \frac{1}{g^2}Q_{o}(\kappa(\omega))(\Omega-\omega),
\end{align}
respectively. Note that $Q_{o},Q_{e}>0$. The identities
\begin{align*}
    Q_{e}'(\kappa) = 1 + \tanh^2(\kappa d)+ d\kappa \sech^2(\kappa d),\quad Q_{o}'(\kappa) = 1+\coth^2(\kappa d)-\kappa d\csch^2(\kappa d),
\end{align*}
show that these functions are increasing for all $\kappa > 0$.

\subsubsection{$\mu>0$} In this case all eigenvalues must be in the lower gap $G_{-}$. Differentiating the right hand sides of the conditions \eqref{eq:doubledelta_condition_2} we find
\begin{align*}
    Q_{e}'(\kappa(\omega))(\Omega-\omega)\kappa'(\omega) - Q_{e}(\kappa(\omega)) < 0, \\
    Q_{o}'(\kappa(\omega))(\Omega-\omega)\kappa'(\omega) - Q_{o}(\kappa(\omega)) < 0,
\end{align*}
as $(\Omega -\omega) > 0$ in $G_{-}$ and $\kappa'(\omega)<0$. Moreover, both curves tend to $+\infty$ as $\omega\to-\infty$. The curve $Q_{e}(\kappa(\omega))(\Omega-\omega)$ tends to $0$ as $\omega\to\Omega^{-}$ while 
\begin{align*}
    \lim_{\omega\to\Omega^{-}}\frac{1}{g^2}Q_{o}(\kappa(\omega))(\Omega-\omega) = \mu_{L}:=\dfrac{\Omega+\sqrt{\Omega^2+4g^2\rho_0}}{2dg^2},
\end{align*}
by direct calculation. Hence, the even condition always has a single solution which lies in the lower gap $G_{-}$, while the odd solution may have zero or one solution. If $\mu >\mu_{L}$ then there is one odd solution in $G_{-}$, while if $\mu\leq \mu_{L}$ there is no odd solution.

\subsubsection{$\mu<0$} Conversely, when $\mu<0$, all eigenvalues must be in the upper gap $G_{+}$. We claim that the conditions on the right hand side of \eqref{eq:doubledelta_condition_2} define concave up functions of $\omega$. Taking the second derivative of the right hand side of the conditions \eqref{eq:doubledelta_condition_2} we find

\begin{align*}
    \frac{(\kappa+2\omega-\Omega)^2(Q_j'-\kappa Q_j'')+4\kappa^2(\omega-\Omega)Q_j'}{4\kappa^3(\omega-\Omega)},\quad j = e,o.
\end{align*}
As $\kappa(\omega)>0$ and $(\omega-\Omega)>0$ on $G_{+}$ it suffices to show that $(Q_j'-\kappa Q_j'')>0$. We prove this inequality for $Q_{e}$ and omit the analogous argument for $Q_{o}$. 
\begin{align*}
    Q_e' -\kappa Q_{e}'' = 1+ \tanh(\kappa d) -\kappa d\sech^2(\kappa d) + 2d^2\kappa^2 \sech^2(\kappa d)\tanh(\kappa d).
\end{align*}
The last term is clearly positive, while the positivity of the first follows from
\begin{align*}
    1+ \tanh(\kappa d) -\kappa d\sech^2(\kappa d) &= \frac{e^{\kappa d}\cosh(\kappa d)-\kappa d}{\cosh^2(\kappa d)}=\frac{\frac{1}{2}(e^{2\kappa d}+1)-\kappa d}{\cosh^2(\kappa d)}>0.
\end{align*}
Additionally, both right hand sides are negative and approach $0$ as $\omega\to\Omega^{+}$. The limits as $\omega \to \omega_{+}^-$ can once again be calculated directly.
\begin{align*}
    \lim_{\omega \to \omega_{+}^-}\frac{1}{g^2}Q_{e}(\kappa(\omega))(\Omega-\omega)=0,\quad \lim_{\omega \to \omega_{+}^-}\frac{1}{g^2}Q_{o}(\kappa(\omega))(\Omega-\omega) =\mu_{R}:=\frac{\Omega-\omega_{+}}{dg^2}.
\end{align*}
Thus each right hand side has a unique minimum, which we call $\mu_{e}^*$ and $\mu_{o}^*$. By the fact that
\begin{align*}
    1 + e^{-2\kappa d}> 1-e^{-2\kappa d},
\end{align*}
we must have that $\mu_o^*<\mu_e^*<0.$ This implies that there are $0, 1,$ or $2$ even solutions as well as $0, 1, $ or $2$ odd solutions. For sufficiently negative values of $\mu$ there are no bound states. In Figure \ref{fig:double_delta}(b) we calculate the number of bound states numerically for various values of $\mu<0$ and $d>0$.

\subsection{Case 2: $\mu:=\mu_{+}=-\mu_{-}$}
In this case, we can no longer look for even and odd solutions. Instead we observe that the condition \eqref{eq:doubledelta_condition} reduces to 
\begin{align*}
    4\kappa(\omega)^2 -\frac{g^4\mu^2}{(\omega-\Omega)^2}(1-e^{-4\kappa(\omega)d})=0.
\end{align*}
Without loss of generality, we assume $\mu>0$. Then this may be recast as
\begin{align}\label{eq:double_delta_opposite}
    \mu = \frac{2\kappa(\omega)|\omega-\Omega|}{g^2\sqrt{1-e^{-4\kappa(\omega)d}}}.
\end{align}
A similar analysis to that in Case 1 shows that the right hand side of this equation is
\begin{itemize}
    \item positive for $\omega \in G_{-}\cup G_{+}$.
    \item monotone in $G_{-}$ decreasing from $+\infty$ to $0$.
    \item concave down with a unique maximum in $G_{+}$ vanishing at both end points.
\end{itemize}
Thus \eqref{eq:double_delta_opposite} has exactly one solution in the lower-gap $G_{-}$ for every $\mu>0$ and $0, 1,$ or $2$ solutions in the upper-gap $G_{+}$. Figure \ref{fig:double_delta}(c) and \ref{fig:double_delta}(d) show the numerical verification of these results.

\section{Finite Square Barrier}\label{sec:square_well}
In this section, we consider a piecewise constant density $\rho(x)$ which takes the value $\rho_1>0$ on a compact interval centered at the origin and the value $\rho_0\neq \rho_1$ outside of this interval. Specifically, fix $d > 0$ and define
\begin{align}
    \rho(x) = \begin{cases}
        \rho_1 & |x|\leq d\\
        \rho_0 & |x|> d
    \end{cases}.
\end{align}
Define $\lambda_{0}(\omega)$ and $\lambda_{1}(\omega)$ by
\begin{align}
    \lambda_{j}(\omega) = \omega - \frac{g^2\rho_j}{\omega-\Omega},\quad j = 0,1.
\end{align}
Then we seek solutions to the nonlinear eigenvalue problem \ref{eq:reduced_spectral}, repeated below for convenience.
\begin{align}
-\partial_x^2\psi+\frac{g^2\rho(x)}{\omega-\Omega}\psi=\omega\psi.
\end{align}
In order to obtain a square-integrable, continuously differentiable solution $\psi$, we must have exponentially decaying solutions in the region $|x|>d$ and oscillatory solutions in the region $|x|<d$. This leads to the pair of conditions
\begin{align*}
    \lambda_0(\omega) < 0, \quad \lambda_1(\omega)>0.
\end{align*}
We refer to the set of $\omega$ satisfying both conditions above as $\Sigma$. The first condition is simply that the eigenvalues must lie in the gaps $G_{-}\cup G_{+}$. The second condition is a genuinely new condition. To this end define
\begin{align}
    \omega_{j,\pm} = \frac{\Omega \pm\sqrt{\Omega^2 +4g^2\rho_j}}{2}.
\end{align}
Then we have the relations
\begin{itemize}
    \item $\lambda_0(\omega)< 0 \iff \omega \in (-\infty,\omega_{0,-})\cup(\Omega,\omega_{0,+})$
    \item $\lambda_1(\omega)> 0 \iff \omega \in (\omega_{1,-},\Omega)\cup (\omega_{1,+},\infty)$
\end{itemize}
Consequently, in order to obtain a solution $\omega$ must lie in one of the intervals
\begin{align*}
    (\omega_{1,-},\omega_{0,-}),\quad (\omega_{1,+}.\omega_{0,+}),
\end{align*}
with the convention that $(a,b)=\emptyset$ whenever $a\geq b$.
For $\omega\in\Sigma$ (i.e. satisfying $\lambda_0(\omega)<0$ and $\lambda_1(\omega)>0$), let $\kappa(\omega)$ and $q(\omega)$ be defined
\begin{align}
    \kappa(\omega)=\sqrt{-\lambda_0(\omega)},\quad q(\omega)=\sqrt{\lambda_1(\omega)}.
\end{align}
The fact that $\rho(x)$ is even allows us to once again seek out even and odd solutions separately. Any even solution to \eqref{eq:reduced_spectral} with $\rho(x)$ given above is of the form
\begin{align}
    \psi(x) = \begin{cases}
        A\cos(qx) & 0 < x < d \\
        Be^{-\kappa x} & x > d\\
    \end{cases}
\end{align}
Continuity of $\psi$ and $\psi'$ at $x=d$ lead to the pair of equations
\begin{align}
    &A\cos(qd) = Be^{-\kappa d},\\
    &Aq\sin(qd) = \kappa Be^{-\kappa d}.
\end{align}
Dividing one equation by the other we find that the condition for $\omega$ to be an eigenvalue associated to an even eigenfunction is
\begin{align}\label{eq:piecewise_even_condition}
    q(\omega)\tan(q(\omega)d)=\kappa(\omega).
\end{align}
A similar calculation leads to the following condition for $\omega$ to be an eigenvalue with corresponding to an odd eigenfunction.
\begin{align}\label{eq:piecewise_odd_condition}
    -q(\omega)\cot(q(\omega)d) = \kappa(\omega).
\end{align}
\subsection{Case 1: $\rho_1 > \rho_0$}
This corresponds to a localized barrier of additional atoms in the interval $[-d,d]$. In this case $\omega_{1,+}>\omega_{0,+}$ and hence we have that $\Sigma\cap G_{+}=\emptyset$. Thus any potential eigenvalues must lie in $(\omega_{1,-},\omega_{0,-})$. The function $q(\omega)$ is positive, continuous, increasing and satisfies
\begin{align}
    q(\omega_{1,-}) =0,\quad q(\omega_{0,-}) = K_{-}:=\sqrt{\frac{g^2(\rho_1-\rho_0)}{\Omega-\omega_{0,-}}},
\end{align}
while $\kappa(\omega)$ is decreasing. Hence for each integer multiple of $\pi/d$ that $q$ passes through, there is a solution to \eqref{eq:piecewise_even_condition}. Similarly, for each odd half-integer multiple of $\pi/d$ $q$ passes through, there is a solution to \eqref{eq:piecewise_odd_condition}. Consequently, the total number of eigenvalues is given  by the cardinality of the set
\begin{align}
    N_{-}=\left\{ m\in \Z_{\geq 0}\ | \ \frac{m\pi}{2}< K_{-}d\right\}.
\end{align}
We see that the number of eigenvalues is finite, yet monotonically increasing without bound in the coupling $g$, the barrier width $d$, and the density offset $\rho_1-\rho_0$. Plots of these curves and their intersections are given in Figure \ref{fig:finite_square}(a) and \ref{fig:finite_square}(c).

\subsection{Case 2: $\rho_1 < \rho_0$} This corresponds to a relative local absence of atoms in the interval $[-d,d]$. In this case $\omega_{0,-}<\omega_{1,-}$ and hence $\Sigma\cap G_{-}=\emptyset$. Any potential eigenvalues must lie in $(\omega_{1,+},\omega_{0,+})$. Additionally, $q(\omega)$ is still continuous, positive, and increasing, but now takes the values 
\begin{align}
     q(\omega_{1,+}) =0,\quad q(\omega_{0,+}) = K_{+}:=\sqrt{\frac{g^2(\rho_0-\rho_1)}{\omega_{0,+}-\Omega}}.
\end{align}
A similar argument shows that the number of eigenvalues is precisely
the cardinality of the set
\begin{align}
    N_{+}=\left\{ m\in \Z_{\geq 0}\ | \ \frac{m\pi}{2}< K_{+}d\right\},
\end{align}
which is once again monotonically increasing in $d$, $g$, and $\rho_0-\rho_1$. Again, plots of these curves and their intersections are given in Figure \ref{fig:finite_square}(b) and \ref{fig:finite_square}(d).

\begin{figure}[ht]
    \centering
    \includegraphics[width=\linewidth]{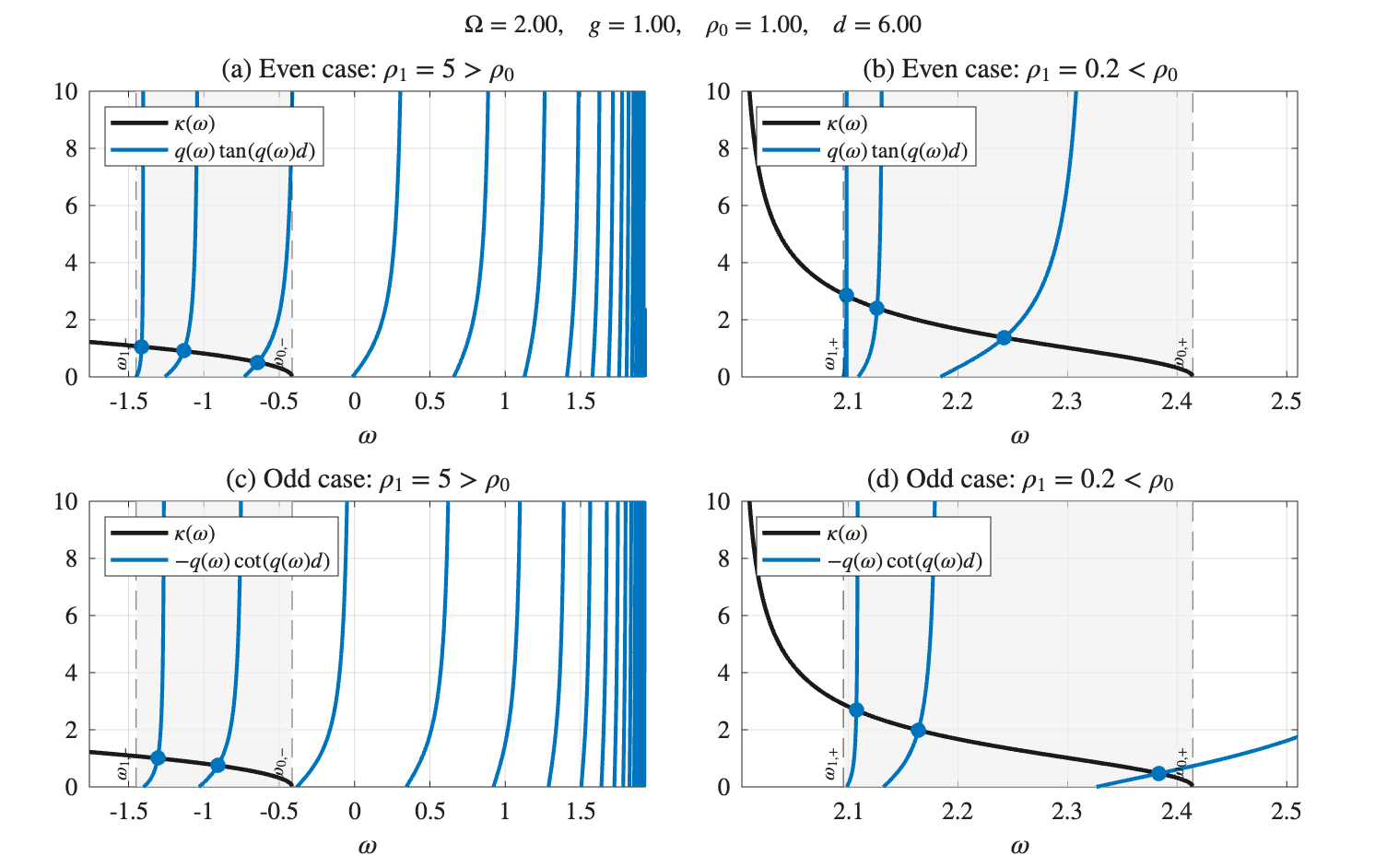}
    \caption{Plots of the curves in \eqref{eq:piecewise_even_condition} and \eqref{eq:piecewise_odd_condition} and their intersections. In (a) and (c) $\rho_1 =5 > \rho_0$ while in (b) and (d) $\rho_1=0.2<\rho_0$.}
    \label{fig:finite_square}
\end{figure}

\section{Discussion}
In this article we have studied a model of a one-dimensional scalar polaritonic field with a quadratic dispersion relation, $\omega(k)=k^2$, coupled to a continuum of identical two-level atoms with density $\rho(x)$ and resonant frequency $\Omega>0$. When $\rho(x)$ is constant, there are no bound states and the continuous (frequency) spectrum is composed of two bands. The lower band, $I_{-}$, is of the form $[\omega_{-},\Omega]$, where $\omega_{-}<0$, while the upper band, $I_{+}$, is unbounded above and of the form $[\omega_{+},\infty)$. Perturbations of the atomic density which are localized in space, either by local concentrations of atoms or negative density defects, lead to the formation of eigenvalues in the gaps between these two bands. \\

The task of determining bound state frequencies off resonance, $\omega\neq \Omega$, reduces to studying a scalar nonlinear eigenvalue problem for the polariton amplitude $\psi(x)$ which takes the form of one-dimensional Schr\"odinger operators with a frequency dependent effective potential $\dfrac{g^2\rho(x)}{\omega-\Omega}$. The physical mechanism can be summarized as follows: positive bounded defects lead to bound states in the lower gap $G_{-}$ while negative bounded defects create bound states in the upper gap $G_{+}$. In Theorem \ref{thm:main_theorem}, we proved this under mild hypotheses on the perturbation of constant density. Then we determined the number of bound states and associated energies exactly for several perturbations of constant background density including a single and double $\delta$-function concentration or defect as well as a piecewise constant square well or barrier. In the case of the $\delta$-function perturbations, there a limited number of bounds states arise regardless of the strength of the perturbation; however, in the case of the square well or barrier, the number of bound states grows with the respective difference between the defect and the background. \\

In order to derive this model we must ignore polarization, assume that all modes of the field couple to the atoms with equal strength $g>0$, and neglect rapidly oscillating terms through the rotating wave approximation. Restoring any one of these features would lead to a slightly more accurate model and is a topic for future study. Moreover, we only focus on bound states and questions of resonances and scattering states remain open. Additionally, the authors are also interested in analogous questions for photon-like fields with a linear dispersion relation $\omega(k)=c|k|$. This dispersion relation leads to a nonlocal system of equations characterized by a fractional Laplacian, and ordinary differential equation techniques are no longer available. 

\section*{Acknowledgements}
This work was supported in part by Amherst College through the Gregory S. Call Academic Intern program, which provided undergraduate research funding for SJ and RK.

\appendix

\section{Proof of \Cref{thm:main_theorem}}\label{app:proof_main}
In this section we present the proof of \Cref{thm:main_theorem}. We use a modified Birman-Schwinger principle \cite{Birman61,Schwinger61}, similar to that which appears in \cite{NonlocPDE1}. 
\begin{proof}
Let $\rho_0>0$ and $\rho_1\in L_c^{\infty}(\R)$. As discussed in the introduction, for $\omega\neq\Omega$ the system \eqref{eq:spectral_problem} reduces to the equation
\begin{align}
    \omega\psi=-\partial_x^2\psi+\frac{g^2\rho(x)}{\omega-\Omega}\psi.
\end{align}
We rewrite the equation as
\begin{align}
    (-\partial_x^2 -\lambda(\omega))\psi = -\frac{g^2\rho_1(x)}{\omega-\Omega}\psi.
\end{align}
For $\omega\in G_{-}\cup G_{+}$ we have $\lambda(\omega)<0$ and hence $(-\partial_x^2-\lambda(\omega))$ is invertible. Defining $\Psi = |\rho_1|^{1/2}\psi$ this equation is equivalent to
\begin{align}\label{eq:BS_principle}
    K(\omega)\Psi = \Psi,
\end{align}
where the operator $K(\omega)$ is defined
\begin{align}
    K(\omega) = \frac{\sgn(\rho_1) g^2}{\Omega-\omega}|\rho_1|^{1/2}(-\partial_x^2 -\lambda(\omega))^{-1}|\rho_1|^{1/2}.
\end{align}
Hence any solutions to \eqref{eq:spectral_problem} provides a solution to \eqref{eq:BS_principle}. Conversely, if $\Psi\in L^2$ is an eigenvector of $K(\omega)$ with eigenvalue $1$ we can define
\begin{align}
    \psi(x) =  \left(\frac{\sgn(\rho_1) g^2}{\Omega-\omega}(-\partial_x^2 -\lambda(\omega))^{-1}|\rho_1|^{1/2}\Psi\right)(x).
\end{align}
Note that $|\rho_1(x)|^{1/2}\psi(x)=K(\omega)\Psi = \Psi$ and that
\begin{align}
    (-\partial_x^2-\lambda(\omega))\psi = -\frac{g^2\rho_1}{\omega-\Omega}\psi,
\end{align}
which is the reduced equation \eqref{eq:reduced_spectral}. Moreover, $\psi\neq 0$ as $\Psi\neq 0$ and $\psi\in L^2(\R)$. Finally $\alpha(x)$ can be recovered by the relation $\alpha(x)=\frac{g}{\omega-\Omega}\psi(x)$. Hence we have the Birman-Schwinger type statement that: $\omega\neq \Omega$ is a bound-state frequency if and only if $1$ is an eigenvalue of $K(\omega$). \\

The operator $K(\omega)$ is self-adjoint on $L^2(\R)$ and has kernel $K(\omega; x,y)$ given by
\begin{align}
    K(\omega;x,y) = \frac{g^2\sgn(\rho_1)}{2\kappa(\omega)(\Omega-\omega)}|\rho_1(x)|^{1/2}e^{-\kappa(\omega)|x-y|}|\rho_1(y)|^{1/2}, \quad \kappa(\omega)=\sqrt{-\lambda(\omega)},
\end{align}
which is square integrable by the compact support of $\rho_1(x)$. This implies that the operator $K(\omega)$ is Hilbert-Schmidt and hence compact. For a discussion of Hilbert-Schmidt operators and their spectral theory see \cite[Chapter VI]{ReedSimonI}. Now suppose that $\rho_1\geq 0$ so that $\sgn(\rho_1)=1$ and let $\omega\in G_{-}$. Then on this set $\Omega-\omega>0$ and $K(\omega;x,y)>0$ and hence for $f\in L^2(\R)$
\begin{align}
    \langle f,K(\omega)f\rangle = \frac{g^2}{(\Omega-\omega)}\int_{\R} \frac{\left| |\widehat{(\rho_1^{1/2}f)}(\xi)\right|^2}{\xi^2+\kappa(\omega)^2}d\xi\geq 0,
\end{align}
which proves $K(\omega)$ is a positive operator. It suffices to show that there is some $\omega\in G_{-}$ such that $\|K(\omega)\|=1$ as the operator norm of a positive self-adjoint compact operator is always an eigenvalue. First we note that $\|K(\omega)\|$ is a continuous function of $\omega$ on $G_{-}$. Indeed, the kernel of $K(\omega)$ is a product of a continuous scalar prefactor, along with the exponential $e^{-\kappa(\omega)|x-y|}$ and $\omega$ independent terms $|\rho_1(x)|^{1/2}$, $|\rho_1(y)|^{1/2}$. This is enough to prove $K(\omega)$ is continuous in the Hilbert-Schmidt norm which implies $\|K(\omega)\|$ is continuous. We have
\begin{align}
    \lim_{\omega\to -\infty}\|K(\omega)\| = 0,\quad \lim_{\omega \to \omega_{-}^-}\|K(\omega)\| = \infty,
\end{align}
Thus by the continuity of $\|K(\omega)\|$ on $G_{-}$ the intermediate value theorem implies the existence of $\omega_*\in G_{-}$ such that $\|K(\omega_*)\|=1$. Similarly, if $\rho_1(x)\leq 0$ then $K(\omega;x,y)>0$ on $G_{+}$ as $\sgn(\rho_1)=-1$ and $\Omega-\omega<0$. On $G_{+}$ we have the limiting values
\begin{align}
    \limsup_{\omega\to \Omega^+}\|K(\omega)\| <1,\quad \lim_{\omega \to \omega_{+}^-}\|K(\omega)\| = \infty.
\end{align}
The bound as $\omega\to \Omega^+$ follows from the inequality
\begin{align}
    \|K(\omega)\|\leq \frac{-g^2}{(\omega-\Omega)\lambda(\omega)}\|\rho_1\|_{L^{\infty}} \to \frac{\|\rho_1\|_{L^{\infty}}}{\rho_0}, \text{as } \omega\to\Omega^+.
\end{align}
As $\rho_0+\rho_1(x)\geq \rho_*$ we have that this limiting ratio is strictly less than $1$. Thus, once again by the continuity of $\|K(\omega)\|$ (now on the set $G_{+})$ the intermediate value theorem implies there is an $\omega_*\in G_{+}$ such that $\|K(\omega_*)\|=1$.
\end{proof}

\printbibliography

\end{document}